\newtheorem{theorem}{Theorem}
\newtheorem{lemma}{Lemma}
\newtheorem{corollary}{Corollary}
\theoremstyle{definition}
\newtheorem{definition}{Definition}
\theoremstyle{remark}
\newtheorem*{remark}{Remark}
\DeclareMathOperator{\poly}{poly}
\DeclareMathOperator{\T}{\mathbb T}
\setlist{nosep}
\begin{document}

\title{Computing local properties in the trivial phase}

\begin{CJK*}{UTF8}{}

\CJKfamily{gbsn}

\author{Yichen Huang (黄溢辰)\thanks{yichuang@mit.edu}}
\affil{Center for Theoretical Physics, Massachusetts Institute of Technology, Cambridge, Massachusetts 02139, USA}

\maketitle

\end{CJK*}

\begin{abstract}

A translation-invariant gapped local Hamiltonian is in the trivial phase if it can be connected to a completely decoupled Hamiltonian with a smooth path of translation-invariant gapped local Hamiltonians. For the ground state of such a Hamiltonian, we show that the expectation value of a local observable can be computed in time $\poly(1/\delta)$ in one spatial dimension and $e^{\poly\log(1/\delta)}$ in two and higher dimensions, where $\delta$ is the desired (additive) accuracy. The algorithm applies to systems of finite size and in the thermodynamic limit. It only assumes the existence but not any knowledge of the path.

\end{abstract}

\section{Introduction and background}

Computing the ground-state expectation value of a local observable in quantum many-body systems is a fundamental problem in condensed matter physics. We consider this problem for translation-invariant gapped local Hamiltonians in systems of finite size and in the thermodynamic limit. Here, ``gapped'' means that the energy gap (defined as the energy difference between the \emph{unique} ground state and the first excited eigenstate) is lower bounded by a positive constant independent of the system size. For notational simplicity, we first work with a chain of $N$ spins, each of which has constant local dimension. In Section \ref{sec:h}, we extend the results to two and higher spatial dimensions.

It is straightforward to define translation-invariant local Hamiltonians in the thermodynamic limit $N\to+\infty$. Without loss of generality, we only consider nearest-neighbor interactions. Let $h_1$ with $\|h_1\|\le1$ be a Hermitian operator acting on the first two spins. Let $\T$ be the (unitary) lattice translation operator so that $h_j:=\T^{j-1}h_1\T^{-(j-1)}$ acts on the $j$th, $(j+1)$th spins. For concreteness, we use open boundary conditions (we will discuss periodic boundary conditions later) and define
\begin{equation} \label{eq:TIH}
H=\left\{H^{(N)}\right\},\quad H^{(N)}:=\sum_{j=1}^{N-1}h_j
\end{equation}
as a sequence of translation-invariant local Hamiltonians: one for each system size $N$.

Do the ground-state properties of $H^{(N)}$ depend smoothly on the system size $N$ and become well defined in the thermodynamic limit $N\to+\infty$? Not always. For example, the ground-state energy (as a function of $N$) can encode the solution of a computationally intractable problem \cite{GI09, GI13, BCO17}. The problem of whether a constant energy gap persists in the thermodynamic limit is undecidable \cite{CPW15, BCLP18}. In size-driven quantum phase transitions, the ground-state properties change abruptly as $N$ crosses the critical system size $N_c$ \cite{BCL+19}. There is good evidence that even under the assumption of a constant energy gap, the ground-state expectation value of a local observable may not always be computed in time independently of $N$ \cite{DB19, Hua19}.

In this paper, we consider the case that $H$ is in the trivial phase.
\begin{definition} [trivial phase; see, e.g., Refs. \cite{CGW10, SPC11, Has12, HC15}] \label{def:tri}
$H=\{H^{(N)}\}$ is in the trivial phase if there exists a smooth path $H(s)=\{H^{(N)}(s)\}$ of translation-invariant local Hamiltonians
\begin{equation}
H^{(N)}(s)=\sum_{j=1}^{N-1}h_j(s),\quad h_j(s):=\T^{j-1}h_1(s)\T^{-(j-1)},\quad0\le s\le1
\end{equation}
such that
\begin{itemize}
\item $H^{(N)}(0)$ is a completely decoupled Hamiltonian so that its ground state is a product state.
\item $H^{(N)}(1)=H^{(N)}$.
\item $\|h_1(s)\|\le1$ and $\|\mathrm dh_1(s)/\mathrm ds\|\le1$.
\item The energy gap of $H^{(N)}(s)$ is lower bounded by a positive constant $\epsilon$ for any $0\le s\le1$ and $N\ge N_0$, where $N_0$ is a constant.
\end{itemize}
\end{definition}

It is widely believed (and argued using uniform matrix product states \cite{FNW92, PVWC07}) that the trivial phase is the only gapped phase in one-dimensional translation-invariant systems \cite{CGW11, SPC11}. Therefore, being in the trivial phase is a very mild assumption in one spatial dimension.

\section{One dimension}

Throughout this paper, asymptotic notations are used extensively. Let $f,g:\mathbb R^+\to\mathbb R^+$ be two functions. One writes $f(x)=O(g(x))$ if and only if there exist positive numbers $M,x_0$ such that $f(x)\le Mg(x)$ for all $x>x_0$; $f(x)=\Omega(g(x))$ if and only if there exist positive numbers $M,x_0$ such that $f(x)\ge Mg(x)$ for all $x>x_0$. To simplify the notation, we use a tilde to hide a polylogarithmic factor, e.g., $\tilde O(f(x)):=O(f(x)\poly|\log f(x)|)$.

Let $A_1$ with $\|A_1\|\le1$ be a local operator acting on the first few spins, and $A_j:=\T^{j-1}A_1\T^{-(j-1)}$ be the lattice-translated copy of $A_1$. Let $\langle\hat O\rangle_N:=\langle\psi^{(N)}|\hat O|\psi^{(N)}\rangle$ be the expectation value of an operator in the ground state $\psi^{(N)}$ of $H^{(N)}$.

\begin{lemma} [open boundary conditions] \label{thm:1}
Suppose that $H=\{H^{(N)}\}$ is in the trivial phase. Then,
\begin{align}
&|\langle A_j\rangle_N-\langle A_j\rangle_{N+1}|=e^{-\tilde\Omega(N-j)},\label{eq:l11}\\
&|\langle A_j\rangle_N-\langle A_{j+1}\rangle_N|=e^{-\tilde\Omega(\min\{j,N-j\})}.\label{eq:l12}
\end{align}
Therefore, both $\lim_{N\to+\infty}\langle A_j\rangle_N$ and $\lim_{N\to+\infty}\langle A_{[\alpha N]}\rangle_N$ are well defined, where $0<\alpha<1$ is a constant and $[\cdots]$ denotes the floor function. Furthermore, the value of the latter limit is independent of $\alpha$.
\end{lemma}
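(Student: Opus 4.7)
The plan is to invoke Hastings's quasi-adiabatic continuation (spectral flow) and then use Lieb-Robinson-type bounds to localize the effect of changing the system size or translating the operator. The key feature is that although the quasi-adiabatic generator is only quasi-local, its tails decay sub-exponentially, which is precisely what produces the $\tilde\Omega$ (rather than $\Omega$) in the statement.

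First I set up the common framework. For each $N\ge N_0$, quasi-adiabatic continuation along $H^{(N)}(s)$ yields a unitary $U^{(N)}$ such that $|\psi^{(N)}\rangle=U^{(N)}|\phi^{(N)}\rangle$, where $|\phi^{(N)}\rangle$ is the translation-invariant product ground state of $H^{(N)}(0)$. The generator of $U^{(N)}$ is a sum of quasi-local terms whose tails decay as $e^{-\tilde\Omega(r)}$, with norms depending only on the gap lower bound $\epsilon$ and on the bound on $\|dh_1/ds\|$. By translation invariance of the path, the terms centered at bulk sites are lattice translates of one another, and the generators for $N$ and $N+1$ agree on terms located far from the right boundary.

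To prove \eqref{eq:l11}, set $\tilde A_j^{(N)}:=U^{(N)\dagger}A_jU^{(N)}$. A Lieb-Robinson-type bound for quasi-adiabatic evolution provides an operator $B$ with $\mathrm{supp}(B)\subseteq\{j-r,\ldots,j+r\}$ and $\|\tilde A_j^{(N)}-B\|=e^{-\tilde\Omega(r)}$. A hybrid interpolation between the two generators, together with Duhamel's identity, shows that the boundary-supported difference propagates to $A_j$ with amplitude $e^{-\tilde\Omega(N-j)}$, so the same $B$ approximates $\tilde A_j^{(N+1)}$ when $r\lesssim(N-j)/2$. Since $|\phi^{(N)}\rangle$ and $|\phi^{(N+1)}\rangle$ are translation-invariant product states that coincide on $\mathrm{supp}(B)$, they assign the same expectation value to $B$, and choosing $r=\Theta(N-j)$ gives \eqref{eq:l11}. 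For \eqref{eq:l12} I similarly approximate $\tilde A_j^{(N)}$ and $\tilde A_{j+1}^{(N)}$ by operators $B_j,B_{j+1}$ supported on $r$-neighborhoods of $j$ and $j+1$; bulk translation invariance of the generator forces $B_{j+1}=\T B_j\T^{-1}$ up to $e^{-\tilde\Omega(r)}$ for $r\le\min\{j,N-j\}/2$, and translation invariance of $|\phi^{(N)}\rangle$ completes the argument. The thermodynamic-limit claims then follow: \eqref{eq:l11} alone makes $\{\langle A_j\rangle_N\}_N$ Cauchy for fixed $j$; for the limit along $j=[\alpha N]$, combine \eqref{eq:l11} (to go from $N$ to $N+1$) with one application of \eqref{eq:l12} (to shift the index by at most one), and telescoping at most $|\alpha-\alpha'|N$ applications of \eqref{eq:l12}, each of size $e^{-\tilde\Omega(N)}$, gives $\alpha$-independence of the limit.

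The main obstacle is the hybrid bound comparing $\tilde A_j^{(N)}$ and $\tilde A_j^{(N+1)}$: the generators differ only near the right boundary, but one must carefully track how this difference propagates through the entire $s\in[0,1]$ evolution back to site $j$. This requires iterating Lieb-Robinson bounds for the quasi-adiabatic generator, whose quasi-locality has sub-exponential (not strictly exponential) tails; the resulting polylog corrections are exactly what is absorbed into the $\tilde\Omega$ notation.
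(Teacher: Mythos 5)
Your proof of \eqref{eq:l11} is essentially the paper's argument: quasi-adiabatically continue from the decoupled point, show that the generators for system sizes $N$ and $N+1$ differ only by terms that are either exponentially small or located far to the right of site $j$, propagate this through the flow with a Lieb--Robinson bound for the quasi-local generator, and use the fact that the two product ground states at $s=0$ agree on the localized support of the dressed observable. The paper packages this as the single estimate $\|U^{(N)\dag}(s)A_jU^{(N)}(s)-U^{(N+1)\dag}(s)A_jU^{(N+1)}(s)\|=e^{-\tilde\Omega(N-j)}$; your hybrid/Duhamel interpolation is the standard way to prove exactly that estimate.

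Where you genuinely diverge is \eqref{eq:l12}. The paper deduces it from \eqref{eq:l11} alone by a relabeling trick: append an $(N{+}1)$th spin on the right (error $e^{-\tilde\Omega(N-j)}$), then delete the first spin of the resulting chain, which by translation invariance of the couplings identifies $\langle A_{j+1}\rangle_{N+1}$ with $\langle A_j\rangle_N$ up to error $e^{-\tilde\Omega(j)}$ from a mirror-image application of \eqref{eq:l11}. You instead argue directly that the dressed operators $\tilde A_j^{(N)}$ and $\tilde A_{j+1}^{(N)}$ are approximate lattice translates of each other. That route is sound, but it requires an extra ingredient you should state explicitly: the bulk terms of the quasi-adiabatic generator satisfy $D_{k+1}^{(N)}(s)\approx\T D_k^{(N)}(s)\T^{-1}$ up to $e^{-\tilde\Omega(\min\{k,N-k\})}$, since $H^{(N)}(s)$ with open boundaries is \emph{not} exactly translation covariant and the covariance of $D_k^{(N)}(s)$ only holds after localizing the Heisenberg evolution in its definition via the Lieb--Robinson bound (the same argument that gives the paper's comparison of $D_j^{(N)}$ with $D_j^{(N+1)}$). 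One should also note that the $s=0$ product state is only translation invariant away from the two boundary sites, which is harmless here since the localized supports stay in the bulk. The paper's reduction buys economy — no separate translation-covariance lemma is needed — while your direct argument makes the physical mechanism (bulk homogeneity of the spectral flow) more transparent. Your handling of the limits, including the telescoping over $O(N)$ steps each of size $e^{-\tilde\Omega(N)}$ to get $\alpha$-independence, matches what the paper leaves as ``straightforward consequences.''
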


\begin{proof}
We use the technique of quasi-adiabatic continuation, which was originally due to Hastings \cite{Has04, HW05} and subsequently developed in Refs. \cite{Osb07, Has10, BMNS12}. Combining with the Lieb-Robinson bound \cite{LR72, NS06, HK06, WH19}, this technique has applications in proving, e.g., stability of topological order \cite{BHM10, BH11, MZ13} and quantization of the Hall conductance \cite{HM15}.

Following Ref. \cite{Has12}, we give a high-level overview of quasi-adiabatic continuation. Define
\begin{equation} \label{qac}
D^{(N)}(s)=\sum_{j=1}^{N-1}D^{(N)}_j(s),\quad D^{(N)}_j(s)=i\int_{-\infty}^{+\infty} f(t)e^{iH^{(N)}(s)t}\frac{\mathrm dh_j(s)}{\mathrm ds}e^{-iH^{(N)}(s)t}\,\mathrm dt,
\end{equation}
where the ``filter function'' $f(t)$ is purely imaginary so that $D^{(N)}(s)$ is Hermitian. Let $\psi^{(N)}(s)$ be the ground state of $H^{(N)}(s)$. Reference \cite{Has10} explicitly constructed $f(t)$ such that
\begin{align}
&\frac{\mathrm d\psi^{(N)}(s)}{\mathrm ds}=-iD^{(N)}(s)\psi^{(N)}(s),\label{sch}\\
&|f(t)|=e^{-\Omega(t/\log^{1.001}t)}=e^{-\tilde\Omega(t)}.\label{filter}
\end{align}
The first equation allows us to interpret $D^{(N)}(s)$ as a time-dependent Hamiltonian, whose dynamics generates the state $\psi^{(N)}(s)$ from $\psi^{(N)}(0)$. Let $B(j,r)$ be the radius-$r$ neighborhood of the $j$th spin. Choosing an appropriate constant $c$, we split the integral (\ref{qac}) into two terms:
\begin{equation}
D_j^{(N)}(s)=i\int_{|t|\le cr}\cdots+i\int_{|t|>cr}\cdots.
\end{equation}
The first term is approximately supported on $B(j,r)$ due to the Lieb-Robinson bound for $H^{(N)}(s)$, and second term is negligible due to the fast decay (\ref{filter}) of $f(t)$. Hence, $D^{(N)}_j(s)$ is local in the sense that
\begin{equation} \label{eq:local}
\left\|D^{(N)}_j(s)-D^{(N)}_j(s)|_{B(j,r)}\right\|=e^{-\tilde\Omega(r)},\quad\forall j,s,
\end{equation}
where $D^{(N)}_j(s)|_{B(j,r)}$ is the best approximation of $D^{(N)}_j(s)$ supported on $B(j,r)$. The locality (\ref{eq:local}) of $D^{(N)}_j(s)$ implies that the dynamics generated by $D^{(N)}(s)$ also satisfies a Lieb-Robinson bound \cite{Has10}. A similar argument shows that
\begin{equation} \label{eq:error1}
\left\|D^{(N)}_j(s)-D^{(N+1)}_j(s)\right\|=e^{-\tilde\Omega(N-j)}.
\end{equation}
Let
\begin{equation}
U^{(N)}(s):=\mathcal S'e^{-i\int_0^sD^{(N)}(s')\,\mathrm ds'}=\mathcal S'e^{-i\int_0^s\sum_{k=1}^{N-1}D_k^{(N)}(s')\,\mathrm ds'},
\end{equation}
where $\mathcal S'$ is the $s'$-ordering operator. The thermodynamic limit of quasi-adiabatic continuation exists \cite{BMNS12} in the sense that
\begin{equation} \label{eq:tdl}
\left\|U^{(N)\dag}(s)A_jU^{(N)}(s)-U^{(N+1)\dag}(s)A_jU^{(N+1)}(s)\right\|=e^{-\tilde\Omega(N-j)}.
\end{equation}
Indeed, the difference between $D^{(N)}_k(s')$ and $D^{(N+1)}_k(s')$ for $k\le(j+N)/2$ is controlled by Eq. (\ref{eq:error1}). For $k>(j+N)/2$, the effects of $D^{(N)}_k(s')$ and $D^{(N+1)}_k(s')$ are negligible because the dynamics generated by $D^{(N)}(s')$ satisfies a Lieb-Robinson bound.

Equation (\ref{eq:l11}) follows from Eq. (\ref{eq:tdl}) and the fact that $\psi^{(N)}(0)$ is the reduced state of $\psi^{(N+1)}(0)$ on the first $N$ spins. Equation (\ref{eq:l12}) is obtained by using Eq. (\ref{eq:l11}) twice: In a chain of $N$ spins, we add an $(N+1)$th spin and delete the first spin, introducing errors of $e^{-\tilde\Omega(N-j)}$ and $e^{-\tilde\Omega(j)}$, respectively. The ``therefore'' and ``furthermore'' parts of Lemma \ref{thm:1} are straightforward consequences of Eqs. (\ref{eq:l11}), (\ref{eq:l12}).
\end{proof}

\begin{corollary} \label{cor:op}
Suppose that $H=\{H^{(N)}\}$ is in the trivial phase. The ground-state energy density converges as
\begin{equation}
\left|\frac{\langle H^{(N)}\rangle_N}{N-1}-\lim_{N'\to+\infty}\frac{\langle H^{(N')}\rangle_{N'}}{N'-1}\right|=O(1/N).
\end{equation}
\end{corollary}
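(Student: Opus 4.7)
My plan is to replace the energy density $\frac{\langle H^{(N)}\rangle_N}{N-1}=\frac{1}{N-1}\sum_{j=1}^{N-1}\langle h_j\rangle_N$ by the single bulk value $\langle h_{[N/2]}\rangle_N$, paying an $O(1/N)$ error coming from the two boundary tails of the chain, and then to show that this bulk value itself converges to a thermodynamic limit $e_\infty$ at an exponential rate. Both ingredients are consequences of Lemma~\ref{thm:1}.

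For the first ingredient I would telescope Eq.~(\ref{eq:l12}) and sum the resulting geometric series to obtain the pointwise bound $|\langle h_j\rangle_N-\langle h_{[N/2]}\rangle_N|=e^{-\tilde\Omega(\min\{j,N-j\})}$. Summing over $j=1,\dots,N-1$, the total is $O(1)$ because each of the two boundary tails is controlled by the convergent series $\sum_{k\ge 1}e^{-\tilde\Omega(k)}$. Dividing by $N-1$ then gives $\bigl|\frac{\langle H^{(N)}\rangle_N}{N-1}-\langle h_{[N/2]}\rangle_N\bigr|=O(1/N)$, and the analogous estimate applied at chain length $M$ reduces the corollary to showing that the sequence $\{\langle h_{[M/2]}\rangle_M\}$ is Cauchy with a sufficiently fast explicit rate.

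For the second ingredient, fix $N$ and any $M>N$, and set $b=[N/2]$. Iterating Eq.~(\ref{eq:l11}) while holding the operator fixed at site $b$ yields $|\langle h_b\rangle_N-\langle h_b\rangle_M|\le\sum_{K=N}^{M-1}e^{-\tilde\Omega(K-b)}$, which is uniformly $e^{-\tilde\Omega(N)}$ in $M$ because the summand decays subexponentially so the tail is dominated by its first term. Telescoping Eq.~(\ref{eq:l12}) inside the chain of length $M$ then gives $|\langle h_b\rangle_M-\langle h_{[M/2]}\rangle_M|=e^{-\tilde\Omega(b)}=e^{-\tilde\Omega(N)}$. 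Combining the two, $|\langle h_{[N/2]}\rangle_N-\langle h_{[M/2]}\rangle_M|=e^{-\tilde\Omega(N)}$ for every $M>N$, so the limit $e_\infty:=\lim_{M\to\infty}\langle h_{[M/2]}\rangle_M$ exists and $|\langle h_{[N/2]}\rangle_N-e_\infty|=e^{-\tilde\Omega(N)}$. A final triangle inequality assembles the three estimates into the corollary.

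I do not expect a serious obstacle: the corollary is essentially a bookkeeping exercise on top of Lemma~\ref{thm:1}. The one point that deserves care is the $M\to\infty$ tail sum in the second ingredient, where the subexponential decay in $K$ must beat the unboundedly many terms so that the bulk estimate $e^{-\tilde\Omega(N)}$ is uniform and safely dominated by the $O(1/N)$ boundary error from the first ingredient. Consequently the overall rate is determined entirely by the boundary contribution, giving the claimed $O(1/N)$.
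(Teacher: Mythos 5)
Your proposal is correct, and it reaches the conclusion by a genuinely different decomposition than the paper. The paper compares consecutive system sizes: it shows $\bigl|\langle H^{(N)}\rangle_N/(N-1)-\langle H^{(N+1)}\rangle_{N+1}/N\bigr|=O(1/N^2)$ by matching $h_j\leftrightarrow h_j$ in the left half of the chain (via Eq.~(\ref{eq:l11})) and $h_j\leftrightarrow h_{j+1}$ in the right half, with a normalization-correction term pivoting on $\langle h_{[N/2]}\rangle_{N+1}$; the claimed $O(1/N)$ then follows by summing the telescoping series $\sum_{N'\ge N}O(1/N'^2)$. You instead extract the bulk value $\langle h_{[N/2]}\rangle_N$ directly, showing the energy density differs from it by $O(1/N)$ (a pure boundary effect) and that the bulk value itself is Cauchy at rate $e^{-\tilde\Omega(N)}$. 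Both routes use exactly the same two inputs from Lemma~\ref{thm:1} and both require the observation that $\sum_{k\ge b}e^{-\tilde\Omega(k)}=e^{-\tilde\Omega(b)}$ despite the subexponential (rather than exponential) decay hidden in the tilde, which you correctly flag as the one delicate point. Your version has the advantage of making explicit both the existence of the limit and the fact that the $O(1/N)$ rate is entirely a boundary contribution (the bulk converges almost exponentially), which dovetails with the paper's later comparison to Corollary~\ref{cor} under periodic boundary conditions; the paper's version is slightly more economical in that it never needs to name the limiting value.
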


\begin{proof}
It follows from
\begin{align}
&\left|\langle H^{(N)}\rangle_N/(N-1)-\langle H^{(N+1)}\rangle_{N+1}/N\right|\nonumber\\
&=\left|\sum_{j=1}^{[N/2]-1}\frac{\langle h_j\rangle_N}{N-1}+\sum_{j=[N/2]}^{N-1}\frac{\langle h_j\rangle_N}{N-1}-\sum_{j=1}^{[N/2]-1}\frac{\langle h_j\rangle_{N+1}}{N}-\frac{\langle h_{[N/2]}\rangle_{N+1}}{N}-\sum_{j=[N/2]+1}^{N}\frac{\langle h_j\rangle_{N+1}}{N}\right|\nonumber\\
&\le\sum_{j=1}^{[N/2]-1}\frac{|\langle h_j\rangle_N-\langle h_j\rangle_{N+1}|}{N-1}+\sum_{j=[N/2]}^{N-1}\frac{|\langle h_j\rangle_N-\langle h_{j+1}\rangle_{N+1}|}{N-1}\nonumber\\
&+\sum_{j=1}^{[N/2]-1}\frac{|\langle h_j\rangle_{N+1}-\langle h_{[N/2]}\rangle_{N+1}|}{(N-1)N}+\sum_{j=[N/2]+1}^{N}\frac{|\langle h_j\rangle_{N+1}-\langle h_{[N/2]}\rangle_{N+1}|}{(N-1)N}\nonumber\\
&\le e^{-\tilde\Omega(N)}+e^{-\tilde\Omega(N)}+\sum_{j=1}^{[N/2]-1}\frac{je^{-\tilde\Omega(-j)}}{(N-1)N}+\sum_{j=[N/2]+1}^{N}\frac{(N+1-j)e^{-\tilde\Omega(-(N+1-j))}}{(N-1)N}=O(1/N^2).
\end{align}
\end{proof}

\begin{theorem} \label{thm:2}
Suppose that $H=\{H^{(N)}\}$ is in the trivial phase. Then, $\langle A_j\rangle_N$ for any $j,N$ and the limits $\lim_{N\to+\infty}\langle A_j\rangle_N$, $\lim_{N\to+\infty}\langle A_{[\alpha N]}\rangle_N$ for any $0<\alpha<1$ can be computed to additive accuracy $\delta$ in time $\poly(1/\delta)$, where the degree of the polynomial is an absolute constant independent of the energy gap.
\end{theorem}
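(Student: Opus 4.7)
The plan is to reduce the computation to a ground-state expectation on a truncated chain of length $M=\tilde O(\log(1/\delta))$, and then invoke any polynomial-time algorithm for ground states of gapped one-dimensional local Hamiltonians (e.g., the algorithm of Landau, Vazirani, and Vidick). Lemma~\ref{thm:1} is the key enabler: both finite-size shifts and single-site lattice translations change $\langle A_j\rangle$ by an amount super-polynomially small in the distance to the boundary.

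First I would truncate using Lemma~\ref{thm:1}. Iterating Eq.~(\ref{eq:l11}) gives $|\langle A_j\rangle_N-\langle A_j\rangle_M|=e^{-\tilde\Omega(M-j)}$ for any $j\le M\le N$, while Eq.~(\ref{eq:l12}) controls cross-site shifts. Fix $r=\tilde O(\log(1/\delta))$ so that $e^{-\tilde\Omega(r)}\le\delta/2$, and treat three cases: for $\langle A_j\rangle_N$ at finite $N$, take $M=\min(N,\,j+r)$ and compute $\langle A_j\rangle_M$; for $\lim_{N\to\infty}\langle A_j\rangle_N$, take $M=j+r$ and proceed similarly; for $\lim_{N\to\infty}\langle A_{[\alpha N]}\rangle_N$, take $M=\lceil r/\min(\alpha,1-\alpha)\rceil$ so that the bulk site $[\alpha M]$ lies at distance at least $r$ from both boundaries, and compute $\langle A_{[\alpha M]}\rangle_M$. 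In each case the truncation error is at most $\delta/2$, and $H^{(M)}$ acts on at most $M=\tilde O(\log(1/\delta))$ spins. By the trivial-phase assumption, $H^{(M)}$ has gap $\ge\epsilon$ once $M\ge N_0$; this holds automatically for $\delta$ below a constant threshold, and enlarging $r$ to $\max(r,N_0)$ covers the remaining regime without changing the asymptotics.

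It remains to compute $\langle A_{j'}\rangle$ in the ground state of $H^{(M)}$ to accuracy $\delta/2$. I would invoke the Landau-Vazirani-Vidick algorithm (or any comparable subroutine) to produce an MPS description accurate to $\delta/2$ in any local expectation, in time polynomial in $M$ and $1/\delta$, where the $1/\delta$-exponent is an absolute constant independent of $\epsilon$. Evaluating $A_{j'}$ on the MPS takes $\poly(M)$ additional time. Since $M=\tilde O(\log(1/\delta))$, the $M$-dependent factors are polylogarithmic in $1/\delta$ and are dominated by any positive power of $1/\delta$, so the overall runtime is $\poly(1/\delta)$ with gap-independent degree.

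The main obstacle is essentially bookkeeping. The hidden constants in the $\tilde\Omega$ decay from quasi-adiabatic continuation, and the prefactor and bond dimension demanded by the 1D gapped-ground-state subroutine, all depend on $\epsilon$, the Lieb-Robinson velocity, and the local dimension. One must verify that none of this $\epsilon$-dependence leaks into the exponent of $1/\delta$ in the final runtime, which comes down to selecting a subroutine whose polynomial degree in $1/\delta$ is absolute, and then confirming that all polylogarithmic $M$-factors can indeed be absorbed into that polynomial.
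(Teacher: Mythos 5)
Your proposal is correct and follows essentially the same route as the paper: truncate to a neighborhood of radius $r=\tilde O(\log(1/\delta))$ via Lemma \ref{thm:1}, then run a polynomial-time algorithm for gapped one-dimensional ground states on the truncated Hamiltonian. The only material difference is the choice of subroutine: the paper invokes Ref.~\cite{Hua14a} precisely because its running time is $\poly(r,1/\delta)$ with the degree an absolute constant independent of the gap, which is exactly the property you correctly flag as the one that must be verified for whatever subroutine you substitute.
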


\begin{proof}
Let $B(j,r)$ be the radius-$r$ neighborhood of the $j$th spin. Lemma \ref{thm:1} implies that the expectation value of $A_j$ in the ground state of
\begin{equation}
H^{(N)}|_{B(j,r)}:=\sum_{j\in B(j,r)}h_j
\end{equation}
is an approximation to $\langle A_j\rangle_N$ with error $e^{-\tilde\Omega(r)}$. (Similarly, $\lim_{N\to+\infty}\langle A_j\rangle_N$ can be computed from $H^{(N\to+\infty)}|_{B(j,r)}$, which is supported on a subsystem of finite size.) To achieve additive accuracy $\delta$, it suffices to choose $r=\tilde O(\log(1/\delta))$. Exactly diagonalizing $H^{(N)}|_{B(j,r)}$ would result in an algorithm with running time $e^{O(r)}=e^{\tilde O(\log(1/\delta))}$, which is already close to but still worse than $\poly(1/\delta)$. Since $H^{(N)}|_{B(j,r)}$ has a constant energy gap, we can use the algorithm in Ref. \cite{Hua14a}. The running time of the algorithm is polynomial in the (sub)system size and inverse precision: $\poly(r,1/\delta)=\poly(1/\delta)$, where the degree of the polynomial is an absolute constant independent of the energy gap.
\end{proof}

\begin{remark}
It may be interesting to compare Theorem \ref{thm:2} with a result of Ref. \cite{Hua15}. Only assuming a constant energy gap (not translational invariance or being in the trivial phase), this reference gives an algorithm that computes the ground-state energy density $\langle H^{(N)}\rangle_N/(N-1)$ to additive accuracy $\delta$ in time $\poly(1/\delta)$.
\end{remark}

\subsection{Periodic boundary conditions}

In this subsection only, we consider periodic boundary conditions. For Definition \ref{def:tri} (of the trivial phase), the only modification is that the translation-invariant gapped local Hamiltonians $H(s)=\{H^{(N)}(s)\}$ in the smooth path should also use periodic boundary conditions. Equation (\ref{eq:l12}) becomes trivial: The left-hand side is identically $0$ due to the translational invariance of the ground state. The proof of the following lemma is essentially the same as that of Lemma \ref{thm:1}.

\begin{lemma}
Suppose that $H=\{H^{(N)}\}$ is in the trivial phase. Then,
\begin{equation}
|\langle A_j\rangle_N-\langle A_j\rangle_{N+1}|=e^{-\tilde\Omega(N)}.
\end{equation}
Therefore, $\lim_{N\to+\infty}\langle A_j\rangle_N$ is well defined.
\end{lemma}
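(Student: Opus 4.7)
The plan is to replay the quasi-adiabatic continuation argument of Lemma~\ref{thm:1} with modifications appropriate to the ring topology. Fix the site on which $A_j$ acts, and use the same index $j$ on both rings so that $A_j$ is the same local operator near this site in either system. At $s=0$, both $\psi^{(N)}(0)$ and $\psi^{(N+1)}(0)$ are translation-invariant product states built from the same single-site ground state of $h_1(0)$, so $\langle A_j\rangle$ at $s=0$ coincides exactly across the two ring sizes rather than merely up to exponential corrections. As in the proof of Lemma~\ref{thm:1}, I would then define the quasi-adiabatic generators $D^{(N)}(s)$ and $D^{(N+1)}(s)$ as in (\ref{qac}) using the periodic-BC Hamiltonians, write $\psi^{(N)}(1)=U^{(N)}(1)\psi^{(N)}(0)$, and estimate the resulting difference of Heisenberg-evolved expectation values.

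The central estimate to reproduce is the periodic analog of (\ref{eq:error1}). For any bond $k$ and any $r<N/2$, the radius-$r$ neighborhood of $k$ is a path graph on both rings, on which the local Hamiltonians of $H^{(N)}(s)$ and $H^{(N+1)}(s)$ are identical; combining the Lieb-Robinson bound for $H^{(N)}(s)$ with the decay (\ref{filter}) of the filter function therefore yields
\begin{equation*}
\left\|D_k^{(N)}(s)-D_k^{(N+1)}(s)\right\|=e^{-\tilde\Omega(N)}
\end{equation*}
uniformly in $k$. Crucially, since every site on a ring is at distance $\Omega(N)$ from its antipode, there are no boundary bonds where this estimate degrades, in contrast with the open-boundary bound (\ref{eq:error1}).

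The final step is then to conjugate $A_j$ by $U^{(N)}(1)$ and $U^{(N+1)}(1)$. Truncating each generator $D_k(s)$ to its best approximation on a ball of radius $\sim N/3$, and noting by (\ref{eq:local}) and the displayed estimate that the resulting truncated evolutions agree within a neighborhood of $j$, produces an approximation of the Heisenberg-evolved $A_j$ supported in that neighborhood with error $e^{-\tilde\Omega(N)}$. Evaluating the expectation in the $s=0$ product state, which is locally identical on both rings around $j$, then gives $|\langle A_j\rangle_N-\langle A_j\rangle_{N+1}|=e^{-\tilde\Omega(N)}$. Translation invariance of $\langle A_j\rangle_N$ on the ring makes the ``therefore'' statement immediate.

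The main obstacle, though mild, is the bookkeeping involved in comparing operators on rings of different sizes, since $D_k^{(N)}$ and $D_k^{(N+1)}$ act on distinct Hilbert spaces: the bound above must be interpreted via their truncations to a common local patch, on which a canonical identification makes the difference meaningful. Once this identification is set up, the proof becomes a faithful transcription of Lemma~\ref{thm:1}, with the simplification that there are no boundary effects on a ring.
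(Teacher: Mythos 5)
Your overall strategy---replaying the quasi-adiabatic continuation argument of Lemma \ref{thm:1} on the ring---is the right one (the paper itself only remarks that the proof is ``essentially the same''), but your central displayed estimate is false as stated. A ring of $N$ sites and a ring of $N+1$ sites cannot be identified bond-by-bond everywhere: the extra site must be inserted somewhere, and near that insertion point (the ``seam'') the radius-$r$ neighborhoods of corresponding bonds are \emph{not} identical path graphs, since one of them contains the extra site. Consequently $\|D_k^{(N)}(s)-D_k^{(N+1)}(s)\|$ is of order $1$, not $e^{-\tilde\Omega(N)}$, for bonds $k$ within $O(1)$ of the seam; the correct bound is $e^{-\tilde\Omega(d_k)}$ with $d_k$ the distance from bond $k$ to the seam, in exact analogy with the open-boundary estimate (\ref{eq:error1}), which likewise degrades as $j\to N$. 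Your sentence ``there are no boundary bonds where this estimate degrades'' conflates the absence of a boundary with the absence of a seam; the latter is unavoidable. Indeed, if the uniform estimate were true you would need no truncation or Lieb-Robinson input at all: summing over $k$ and applying Duhamel would give $\|U^{(N)}(1)-U^{(N+1)}(1)\|=Ne^{-\tilde\Omega(N)}=e^{-\tilde\Omega(N)}$ outright, which should have been a warning sign.

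The repair is exactly the two-case split used in the proof of Lemma \ref{thm:1} to establish (\ref{eq:tdl}). Place the seam antipodal to the support of $A_j$. For bonds $k$ at distance at most, say, $N/4$ from $j$, the local identification goes through and $\|D_k^{(N)}(s)-D_k^{(N+1)}(s)\|=e^{-\tilde\Omega(N)}$. For bonds near the seam, i.e., at distance $\Omega(N)$ from $j$, the two generators differ by $O(1)$, but their effect on the Heisenberg evolution of $A_j$ is $e^{-\tilde\Omega(N)}$ by the Lieb-Robinson bound satisfied by the dynamics generated by $D^{(N)}(s)$, which follows from the locality (\ref{eq:local}). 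Your truncation to balls of radius $\sim N/3$ is in the right spirit, but as justified it rests on the false uniform estimate; you must invoke the Lieb-Robinson bound for the quasi-adiabatic dynamics explicitly to dispose of the seam region. The remaining ingredients of your argument (exact agreement of the $s=0$ product states near $j$, the identification of operators on a common local patch, and translation invariance for the ``therefore'' part) are fine.
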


\begin{corollary} \label{cor}
Suppose that $H=\{H^{(N)}\}$ is in the trivial phase. Then,
\begin{equation}
\left|\langle h_j\rangle_N-\lim_{N'\to+\infty}\langle h_j\rangle_{N'}\right|=e^{-\tilde\Omega(N)}.
\end{equation}
\end{corollary}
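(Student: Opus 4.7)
The strategy is to read this off the preceding lemma by a short telescoping argument. Because $h_j$ is a Hermitian local operator with $\|h_j\|\le1$ supported on two adjacent spins, it qualifies as one of the admissible local observables of the lemma (take $A_1=h_1$; under periodic boundary conditions the unique ground state is translation invariant, so for each fixed $N$ we have $\langle h_j\rangle_N=\langle h_1\rangle_N$). The preceding lemma therefore immediately gives $|\langle h_j\rangle_N-\langle h_j\rangle_{N+1}|=e^{-\tilde\Omega(N)}$.

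Next I would telescope:
\begin{equation}
\langle h_j\rangle_N-\lim_{N'\to+\infty}\langle h_j\rangle_{N'}=\sum_{N'=N}^{\infty}\bigl(\langle h_j\rangle_{N'}-\langle h_j\rangle_{N'+1}\bigr),
\end{equation}
and bound each summand by the estimate above. The one point requiring a sentence of care is verifying that the tail sum is itself $e^{-\tilde\Omega(N)}$. Since $e^{-\tilde\Omega(N')}$ abbreviates $e^{-\Omega(N'/\log^{1.001}N')}$, the ratio of successive summands is bounded strictly below $1$ but tends to $1$, so a plain geometric-series bound does not quite suffice. A routine integral comparison (Laplace-type estimate) nevertheless yields $\sum_{N'\ge N}e^{-\Omega(N'/\log^{1.001}N')}=O(\log^{1.001}N)\cdot e^{-\Omega(N/\log^{1.001}N)}$, and the polylog prefactor is absorbed into the $\tilde\Omega$ notation.

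I do not anticipate any genuine obstacle: the substance is imported wholesale from the preceding lemma (itself a PBC repackaging of Lemma \ref{thm:1} via quasi-adiabatic continuation), and the only bookkeeping is the polylog-prefactor absorption in the paragraph above.
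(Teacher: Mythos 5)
Your proposal is correct and matches the paper's (implicit) argument: the corollary is stated as an immediate consequence of the preceding lemma, obtained exactly by taking $A_1=h_1$ and telescoping the bound $|\langle h_j\rangle_{N'}-\langle h_j\rangle_{N'+1}|=e^{-\tilde\Omega(N')}$ over $N'\ge N$. Your extra care with summing the almost-exponential tail and absorbing the polylog prefactor into the $\tilde\Omega$ is sound and goes slightly beyond what the paper spells out.
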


This corollary provides a rigorous justification of the empirical observation that in many one-dimensional translation-invariant gapped systems with periodic boundary conditions, the ground-state energy density converges (almost) exponentially in the system size $N$. In contrast, with open boundary conditions Corollary \ref{cor:op} shows that the scaling is only $O(1/N)$ due to boundary effects.

Theorem \ref{thm:2} remains valid for periodic boundary conditions without any modification.

\section{Higher dimensions} \label{sec:h}

It is very straightforward to extend the results to two and higher spatial dimensions. For notational simplicity and without loss of generality, we consider a two-dimensional square lattice of size $N_x\times N_y$. The thermodynamic limit is defined as a sequence of lattices with growing size $N_x,N_y\to+\infty$. There is one spin at each lattice site, labeled by $(j_x,j_y)$ with $1\le j_x\le N_x$ and $1\le j_y\le N_y$. We use open boundary conditions and define $H=\{H^{(N_x,N_y)}\}$ as a sequence of translation-invariant local Hamiltonians: one for each system size $N_x\times N_y$. Let $\T_x,\T_y$ be the lattice translation operators in the $x,y$ directions, respectively. Let $A_{1,1}$ with $\|A_{1,1}\|\le1$ be a local operator supported in a small neighborhood of site $(1,1)$, and
\begin{equation}
A_{j_x,j_y}:=\mathbb T_y^{j_y-1}\mathbb T_x^{j_x-1}A_{1,1}\mathbb T_x^{-(j_x-1)}\mathbb T_y^{-(j_y-1)}.
\end{equation}
be a local operator supported in a small neighborhood of site $(j_x,j_y)$. Let $\langle\hat O\rangle_{N_x,N_y}$ be the expectation value of an operator $\hat O$ in the ground state of $H^{(N_x,N_y)}$.

Since quasi-adiabatic continuation works in any dimension, Lemma \ref{thm:1} directly generalizes to

\begin{lemma} [open boundary conditions] \label{lm:hd}
Suppose that $H=\{H^{(N_x,N_y)}\}$ is in the trivial phase. Then,
\begin{align}
&|\langle A_{j_x,j_y}\rangle_{N_x,N_y}-\langle A_{j_x,j_y}\rangle_{N_x+1,N_y}|=e^{-\tilde\Omega(N_x-j_x)},\\
&|\langle A_{j_x,j_y}\rangle_{N_x,N_y}-\langle A_{j_x,j_y}\rangle_{N_x,N_y+1}|=e^{-\tilde\Omega(N_y-j_y)},\\
&|\langle A_{j_x,j_y}\rangle_{N_x,N_y}-\langle A_{j_x+1,j_y}\rangle_{N_x,N_y}|=e^{-\tilde\Omega(\min\{j_x,N_x-j_x\})},\\
&|\langle A_{j_x,j_y}\rangle_{N_x,N_y}-\langle A_{j_x,j_y+1}\rangle_{N_x,N_y}|=e^{-\tilde\Omega(\min\{j_y,N_y-j_y\})}.
\end{align}
Therefore, both $\lim_{N_x,N_y\to+\infty}\langle A_{j_x,j_y}\rangle_{N_x,N_y}$ and $\lim_{N_x,N_y\to+\infty}\langle A_{[\alpha_xN_x],[\alpha_yN_y]}\rangle_{N_x,N_y}$ are well defined (and do not depend on the order of limits), where $0<\alpha_x,\alpha_y<1$ are constants. Furthermore, the value of the latter limit is independent of $\alpha_x,\alpha_y$.
\end{lemma}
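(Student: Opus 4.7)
The plan is to rerun the proof of Lemma~\ref{thm:1} essentially verbatim in higher dimension, using that both the quasi-adiabatic continuation construction and the Lieb-Robinson bound it relies on are formulated in arbitrary geometry. First I would write down the 2D analogue of Eq.~(\ref{qac}),
\begin{equation*}
D^{(N_x,N_y)}(s)=\sum_{k_x,k_y}D^{(N_x,N_y)}_{k_x,k_y}(s),
\end{equation*}
with the same filter $f(t)$ as in Eq.~(\ref{filter}), where each summand is indexed by a local interaction term of $H^{(N_x,N_y)}(s)$. The same argument as in 1D shows that $D^{(N_x,N_y)}_{k_x,k_y}(s)$ is approximately supported on a radius-$r$ ball around $(k_x,k_y)$ with error $e^{-\tilde\Omega(r)}$, that the Schr\"odinger-type equation analogous to Eq.~(\ref{sch}) generates $\psi^{(N_x,N_y)}(s)$ from the product state $\psi^{(N_x,N_y)}(0)$, and that the resulting unitary $U^{(N_x,N_y)}(s)$ itself satisfies a Lieb-Robinson bound.

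I would then establish the two boundary-shift inequalities. For the $x$-direction estimate, I compare $U^{(N_x,N_y)}(s)$ and $U^{(N_x+1,N_y)}(s)$ acting on $A_{j_x,j_y}$, splitting $\sum_{k_x,k_y}$ into terms with $k_x\le(j_x+N_x)/2$ and $k_x>(j_x+N_x)/2$. The first group is controlled by the 2D analogue of Eq.~(\ref{eq:error1}), $\|D^{(N_x,N_y)}_{k_x,k_y}(s)-D^{(N_x+1,N_y)}_{k_x,k_y}(s)\|=e^{-\tilde\Omega(N_x-k_x)}$; the second group, being far from $A_{j_x,j_y}$, is suppressed by the Lieb-Robinson bound for $U^{(N_x,N_y)}(s)$. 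Since $\psi^{(N_x,N_y)}(0)$ is the reduced state of $\psi^{(N_x+1,N_y)}(0)$ on the $N_x\times N_y$ subregion, the first inequality follows. The $y$-direction inequality is the same with $x$ and $y$ exchanged, and the third and fourth ``unit translation'' inequalities are obtained by applying the boundary-shift estimates twice (add a row or column on one side, delete it on the other), exactly as in 1D.

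The main technical point that needs care, and the place where the 2D proof genuinely differs from 1D, is that adding a single column brings in $N_y$ new local terms rather than one. I would handle this by observing that a site $(k_x,k_y)$ in the appended column has graph distance at least $(N_x-j_x)+|k_y-j_y|$ from the support of $A_{j_x,j_y}$, so the contribution of the corresponding new term comes with an extra factor $e^{-\tilde\Omega(|k_y-j_y|)}$ from the Lieb-Robinson decay; the sum over $k_y$ then converges to an $O(1)$ constant and the advertised $e^{-\tilde\Omega(N_x-j_x)}$ scaling survives. I expect this geometric bookkeeping to be the only step requiring nontrivial attention.

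Finally, the ``therefore'' and ``furthermore'' clauses follow from the four inequalities by standard Cauchy-sequence reasoning: the boundary-shift estimates give well-defined limits in $N_x$ and $N_y$ separately, their independence in the two directions lets one exchange the order of limits, and the translation estimates, telescoped over $O(N_x+N_y)$ unit shifts, show that $\lim\langle A_{[\alpha_xN_x],[\alpha_yN_y]}\rangle_{N_x,N_y}$ does not depend on $\alpha_x,\alpha_y\in(0,1)$.
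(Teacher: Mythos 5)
Your proposal is correct and takes essentially the same approach as the paper — in fact the paper gives no proof at all beyond the remark that quasi-adiabatic continuation works in any dimension so that Lemma \ref{thm:1} ``directly generalizes,'' and your write-up supplies exactly that generalization, including the one genuinely two-dimensional bookkeeping point (the $N_y$-fold multiplicity of new terms in an appended column, tamed by the extra $e^{-\tilde\Omega(|k_y-j_y|)}$ Lieb-Robinson decay). One small refinement: that same multiplicity also affects the bulk difference terms $D^{(N_x,N_y)}_{k_x,k_y}(s)-D^{(N_x+1,N_y)}_{k_x,k_y}(s)$ with $k_x$ near $N_x$ (there are $N_y$ of them for each $k_x$, and their norm difference decays only in $N_x-k_x$, not in $k_y$), so the split should be by graph distance from the support of $A_{j_x,j_y}$ rather than by $k_x$ alone — your own $|k_y-j_y|$ argument then absorbs the extra factor there as well.
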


\begin{theorem}
Suppose that $H=\{H^{(N_x,N_y)}\}$ is in the trivial phase. Then, $\langle A_{j_x,j_y}\rangle_{N_x,N_y}$ for any $j_x,j_y,N_x,N_y$ and the limits $\lim_{N_x,N_y\to+\infty}\langle A_{j_x,j_y}\rangle_{N_x,N_y}$, $\lim_{N_x,N_y\to+\infty}\langle A_{[\alpha_xN_x],[\alpha_yN_y]}\rangle_{N_x,N_y}$ for any $0<\alpha_x,\alpha_y<1$ can be computed to additive accuracy $\delta$ in time $e^{\tilde O(\log^2(1/\delta))}$.
\end{theorem}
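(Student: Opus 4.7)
The plan is to mirror the proof of Theorem \ref{thm:2} in two dimensions, using Lemma \ref{lm:hd} in place of Lemma \ref{thm:1}. Let $B((j_x,j_y),r)$ denote the radius-$r$ neighborhood of site $(j_x,j_y)$ and define the truncated Hamiltonian $H^{(N_x,N_y)}|_{B((j_x,j_y),r)}$ by keeping only those interaction terms fully supported within this neighborhood. Iterating the four bounds of Lemma \ref{lm:hd} (adding and deleting rows and columns of spins to reduce $H^{(N_x,N_y)}$ down to $H^{(N_x,N_y)}|_{B((j_x,j_y),r)}$, and similarly for the shifted observable $A_{[\alpha_xN_x],[\alpha_yN_y]}$), I expect the ground-state expectation value of $A_{j_x,j_y}$ on the truncated Hamiltonian to approximate $\langle A_{j_x,j_y}\rangle_{N_x,N_y}$ with error $e^{-\tilde\Omega(r)}$. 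The same reasoning, applied to $H^{(N_x\to+\infty,N_y\to+\infty)}|_{B((j_x,j_y),r)}$, handles the thermodynamic-limit quantities; here no additional work is needed because the truncated Hamiltonian is automatically supported on a finite subsystem.

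To achieve additive accuracy $\delta$, it then suffices to take $r=\tilde O(\log(1/\delta))$. The truncated Hamiltonian is supported on $O(r^2)$ spins, each of constant local dimension, so its Hilbert space has dimension $e^{O(r^2)}$. Since its energy gap is still bounded below by the constant $\epsilon$, I would compute the ground state by exact diagonalization (or, more efficiently in practice, by power iteration on a shifted operator such as $cI-H^{(N_x,N_y)}|_{B((j_x,j_y),r)}$), both of which run in time $e^{O(r^2)} = e^{\tilde O(\log^2(1/\delta))}$. Evaluating $\langle A_{j_x,j_y}\rangle$ in the resulting ground state is then straightforward.

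There is no serious obstacle. The only conceptual point worth flagging is that, unlike the one-dimensional case, one cannot invoke the $\poly(\text{size},1/\delta)$ algorithm of Ref.\ \cite{Hua14a}, since no analog is known for two-dimensional gapped Hamiltonians; however, brute-force diagonalization on $O(r^2)$ spins already achieves the advertised bound. The very same argument carries over verbatim to spatial dimension $d\ge 3$, yielding running time $e^{\tilde O(\log^d(1/\delta))}$, which matches the $e^{\poly\log(1/\delta)}$ claim of the abstract.
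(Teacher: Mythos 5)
Your proposal is correct and follows essentially the same route as the paper: restrict the Hamiltonian to an $r\times r$ neighborhood, invoke Lemma \ref{lm:hd} to get error $e^{-\tilde\Omega(r)}$ with $r=\tilde O(\log(1/\delta))$, and exactly diagonalize in time $e^{O(r^2)}=e^{\tilde O(\log^2(1/\delta))}$. Your remark that the $\poly(1/\delta)$ algorithm of Ref.~\cite{Hua14a} is unavailable in two dimensions correctly identifies why the running time degrades relative to Theorem \ref{thm:2}.
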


\begin{proof}
We exactly diagonalize the Hamiltonian restricted to a $r\times r$ neighborhood of site $(j_x,j_y)$. Lemma \ref{lm:hd} implies that $r=\tilde O(\log(1/\delta))$ suffices, and the running time is $e^{O(r^2)}=e^{\tilde O(\log^2(1/\delta))}$.
\end{proof}

\section*{Acknowledgments}

This work was supported by NSF PHY-1818914.

\appendix

\bibliographystyle{abbrv}
\bibliography{trivial}

\begin{thebibliography}{10}

\bibitem{BMNS12}
S.~Bachmann, S.~Michalakis, B.~Nachtergaele, and R.~Sims.
\newblock Automorphic equivalence within gapped phases of quantum lattice
  systems.
\newblock {\em Communications in Mathematical Physics}, 309(3):835--871, 2012.

\bibitem{BCLP18}
J.~Bausch, T.~Cubitt, A.~Lucia, and D.~Perez-Garcia.
\newblock Undecidability of the spectral gap in one dimension.
\newblock arXiv:1810.01858, 2018.

\bibitem{BCO17}
J.~Bausch, T.~Cubitt, and M.~Ozols.
\newblock The complexity of translationally invariant spin chains with low
  local dimension.
\newblock {\em Annales Henri Poincar{\'e}}, 18(11):3449--3513, 2017.

\bibitem{BCL+19}
J.~Bausch, T.~S. Cubitt, A.~Lucia, D.~Perez-Garcia, and M.~M. Wolf.
\newblock Size-driven quantum phase transitions.
\newblock {\em Proceedings of the National Academy of Sciences of the United
  States of America}, 115(1):19--23, 2018.

\bibitem{BH11}
S.~Bravyi and M.~B. Hastings.
\newblock A short proof of stability of topological order under local
  perturbations.
\newblock {\em Communications in Mathematical Physics}, 307(3):609--627, 2011.

\bibitem{BHM10}
S.~Bravyi, M.~B. Hastings, and S.~Michalakis.
\newblock Topological quantum order: Stability under local perturbations.
\newblock {\em Journal of Mathematical Physics}, 51(9):093512, 2010.

\bibitem{CGW10}
X.~Chen, Z.-C. Gu, and X.-G. Wen.
\newblock Local unitary transformation, long-range quantum entanglement, wave
  function renormalization, and topological order.
\newblock {\em Physical Review B}, 82(15):155138, 2010.

\bibitem{CGW11}
X.~Chen, Z.-C. Gu, and X.-G. Wen.
\newblock Classification of gapped symmetric phases in one-dimensional spin
  systems.
\newblock {\em Physical Review B}, 83(3):035107, 2011.

\bibitem{CPW15}
T.~S. Cubitt, D.~Perez-Garcia, and M.~M. Wolf.
\newblock Undecidability of the spectral gap.
\newblock {\em Nature}, 528(7581):207--211, 2015.

\bibitem{DB19}
A.~M. Dalzell and F.~G. S.~L. Brand{\~{a}}o.
\newblock Locally accurate {MPS} approximations for ground states of
  one-dimensional gapped local {H}amiltonians.
\newblock {\em {Quantum}}, 3:187, 2019.

\bibitem{FNW92}
M.~Fannes, B.~Nachtergaele, and R.~F. Werner.
\newblock Finitely correlated states on quantum spin chains.
\newblock {\em Communications in Mathematical Physics}, 144(3):443--490, 1992.

\bibitem{GI09}
D.~{Gottesman} and S.~{Irani}.
\newblock The quantum and classical complexity of translationally invariant
  tiling and {H}amiltonian problems.
\newblock In {\em Proceedings of the 50th Annual IEEE Symposium on Foundations
  of Computer Science}, pages 95--104, 2009.

\bibitem{GI13}
D.~Gottesman and S.~Irani.
\newblock The quantum and classical complexity of translationally invariant
  tiling and {H}amiltonian problems.
\newblock {\em Theory of Computing}, 9(2):31--116, 2013.

\bibitem{Has04}
M.~B. Hastings.
\newblock Lieb-{S}chultz-{M}attis in higher dimensions.
\newblock {\em Physical Review B}, 69(10):104431, 2004.

\bibitem{Has10}
M.~B. Hastings.
\newblock Quasi-adiabatic continuation for disordered systems: Applications to
  correlations, {L}ieb-{S}chultz-{M}attis, and {H}all conductance.
\newblock arXiv:1001.5280, 2010.

\bibitem{Has12}
M.~B. Hastings.
\newblock Locality in quantum systems.
\newblock In J.~Frohlich, M.~Salmhofer, W.~De~Roeck, V.~Mastropietro, and L.~F.
  Cugliandolo, editors, {\em Quantum Theory from Small to Large Scales},
  volume~95 of {\em Lecture Notes of the Les Houches Summer School}, chapter~3,
  pages 171--212. Oxford University Press, Oxford, United Kingdom, 2012.

\bibitem{HK06}
M.~B. Hastings and T.~Koma.
\newblock Spectral gap and exponential decay of correlations.
\newblock {\em Communications in Mathematical Physics}, 265(3):781--804, 2006.

\bibitem{HM15}
M.~B. Hastings and S.~Michalakis.
\newblock Quantization of {H}all conductance for interacting electrons on a
  torus.
\newblock {\em Communications in Mathematical Physics}, 334(1):433--471, 2015.

\bibitem{HW05}
M.~B. Hastings and X.-G. Wen.
\newblock Quasiadiabatic continuation of quantum states: The stability of
  topological ground-state degeneracy and emergent gauge invariance.
\newblock {\em Physical Review B}, 72(4):045141, 2005.

\bibitem{Hua14a}
Y.~Huang.
\newblock A polynomial-time algorithm for the ground state of one-dimensional
  gapped {H}amiltonians.
\newblock arXiv:1406.6355, 2014.

\bibitem{Hua15}
Y.~Huang.
\newblock Computing energy density in one dimension.
\newblock {a}rXiv:1505.00772, 2015.

\bibitem{Hua19}
Y.~Huang.
\newblock Matrix product state approximations: {B}ringing theory closer to
  practice.
\newblock {\em {Quantum Views}}, 3:26, 2019.

\bibitem{HC15}
Y.~Huang and X.~Chen.
\newblock Quantum circuit complexity of one-dimensional topological phases.
\newblock {\em Physical Review B}, 91(19):195143, 2015.

\bibitem{LR72}
E.~H. Lieb and D.~W. Robinson.
\newblock The finite group velocity of quantum spin systems.
\newblock {\em Communications in Mathematical Physics}, 28(3):251--257, 1972.

\bibitem{MZ13}
S.~Michalakis and J.~P. Zwolak.
\newblock Stability of frustration-free {H}amiltonians.
\newblock {\em Communications in Mathematical Physics}, 322(2):277--302, 2013.

\bibitem{NS06}
B.~Nachtergaele and R.~Sims.
\newblock Lieb-{R}obinson bounds and the exponential clustering theorem.
\newblock {\em Communications in Mathematical Physics}, 265(1):119--130, 2006.

\bibitem{Osb07}
T.~J. Osborne.
\newblock Simulating adiabatic evolution of gapped spin systems.
\newblock {\em Physical Review A}, 75(3):032321, 2007.

\bibitem{PVWC07}
D.~Perez-Garcia, F.~Verstraete, M.~M. Wolf, and J.~I. Cirac.
\newblock Matrix product state representations.
\newblock {\em Quantum Information and Computation}, 7(5-6):401--430, 2007.

\bibitem{SPC11}
N.~Schuch, D.~P\'erez-Garc\'{\i}a, and I.~Cirac.
\newblock Classifying quantum phases using matrix product states and projected
  entangled pair states.
\newblock {\em Physical Review B}, 84(16):165139, 2011.

\bibitem{WH19}
Z.~Wang and K.~R.~A. Hazzard.
\newblock Tightening the {L}ieb-{R}obinson bound in locally-interacting
  systems.
\newblock arXiv:1908.03997, 2019.

\end{thebibliography}

\end{document}